\documentclass[10pt,journal]{IEEEtran}
\usepackage{amsmath,amssymb,amsthm}
\usepackage{graphicx}
\usepackage{booktabs}
\usepackage[hidelinks]{hyperref}
\usepackage{url}

\newtheorem{lemma}{Lemma}
\newtheorem{theorem}{Theorem}
\newtheorem{proposition}{Proposition}

\DeclareMathOperator{\Var}{Var}
\DeclareMathOperator{\Cov}{Cov}
\newcommand{\E}{\mathbb{E}}
\newcommand{\R}{\mathbb{R}}
\newcommand{\N}{\mathbb{N}}
\newcommand{\norm}[1]{\left\lVert#1\right\rVert}

\begin{document}

\title{Deterministic Johnson--Lindenstrauss Projections from Pisot $\beta$-Transformations for Zero-Knowledge Private Routing}

\author{I. Dey~\IEEEmembership{Senior Member,~IEEE} and I. Cherkaoui%
\thanks{I. Cherkaoui, and I. Dey are with South East Technological University, Waterford, Ireland (Email: ilias.cherkaoui@setu.ie, indrakshi.dey@setu.ie). This work is supported in part by HEARG TU RISE Project ``AIQ-Shield" and the HORIZON ECCC Project ``Q-FENCE" under Grant Number 101225708. \emph{This paper is under review at IEEE Signal Processing Letters.}}}

\markboth{IEEE Signal Processing Letters (Submitted)}{}
\maketitle

\begin{abstract}
Zero-knowledge (ZK) proofs certify that a message belongs to an allowed semantic class without revealing the message, but the certificate compares a high-dimensional embedding against class centroids, so its cost grows with the embedding dimension $d$. A Johnson--Lindenstrauss (JL) projection lowers $d$ to $m\ll d$ while preserving pairwise distances, yet a random JL matrix must be committed and its sampling proved inside the circuit, which is costly and a leakage risk. We construct a public deterministic projection from the standardized orbit of a Pisot $\beta$-transformation, analyzed through the spectral gap of the $\beta$-map, the geometric decay of its correlations, rather than equidistribution. We prove that the induced squared-norm estimator is unbiased up to a term decaying geometrically with a sampling gap, and that its variance is $V_0/m$ with a constant $V_0$ that is dimension-free in experiment and, under one stated concentration hypothesis, in theory. A single public seed preserving all pairwise centroid distances therefore exists and is found by search. Against six standard projections, including the chaotic-sequence matrix of Yu \emph{et al.}, the construction matches statistical quality to within measurement noise, and it is the only one simultaneously free of in-circuit randomness and exactly reproducible in a fixed finite field at a per-step cost $\log_2\beta$ rather than $2^{k}$.
\end{abstract}

\begin{IEEEkeywords}
Johnson--Lindenstrauss, random projection, dimensionality reduction, zero-knowledge proofs, $\beta$-expansion, Pisot numbers, deterministic sensing matrices.
\end{IEEEkeywords}

\section{Introduction}
\IEEEPARstart{A}{utonomous} AI agents increasingly cooperate across organizational and trust boundaries. A clinical triage agent forwards a case to a specialist, a compliance agent screens a payment instruction, a router dispatches a prompt to one of several expert models. The routed message is private, yet the routing decision must be verifiable, so that an auditor or counterparty can confirm the message reached only a permitted destination. A zero-knowledge (ZK) proof serves both needs at once: it certifies that a statement is true while revealing nothing beyond its truth \cite{touheed,dsperse,tong}. The statement here is that a semantic embedding $v\in\R^d$ of the message lies in an allowed class, decided by comparing $v$ against a fixed public set of centroids $C=\{c_1,\dots,c_N\}\subset\R^d$.

The cost of this comparison limits deployment. A ZK proof compiles every arithmetic operation into a constraint, and prover time, memory, and proof size all grow with the constraint count. The nearest-centroid test uses $O(Nd)$ multiply-add constraints, and modern embeddings have $d=768$ to $4096$, so this term dominates the proof. Lowering the working dimension to $m\ll d$ lowers prover cost in direct proportion, which in a live system is the difference between a certificate produced in seconds and one taking minutes and gigabytes.

The classical remedy is a Johnson--Lindenstrauss (JL) projection \cite{jl,achlioptas}. It maps the data to dimension $m$ and preserves every pairwise distance to a factor $(1\pm\varepsilon)$, because the projection forms many weighted averages of the coordinates and averages concentrate. Such projections underpin compressed sensing and fast learning \cite{boutsidis}, and structured variants lower their cost \cite{ailon,do}. The obstacle specific to ZK is that the JL matrix is random. It is drawn after the data is fixed, so the circuit must commit to it and prove correct sampling, which restores the cost the projection was meant to remove. A deployable projection must instead be public and fixed, identical across provers and auditors, and native to the circuit's finite field, while still preserving distances provably. Deterministic matrices are usually studied through the restricted isometry property. Blanchard \emph{et al.} \cite{blanchard} show these constants need not decay for deterministic matrices, high-compression Bernoulli matrices are analyzed in \cite{lu}, and, closest to us, chaotic-sequence matrices are validated empirically \cite{yu}. We derive a dimension-free variance constant from the transfer-operator spectral gap rather than measuring it, and we use the Pisot property for exact finite-field reproducibility, which generic chaotic maps lack.

We obtain a matrix from a Pisot $\beta$-transformation, a stretch-and-fold chaotic map on the unit interval. Four contributions follow. First, we prove distance preservation from the spectral gap of the $\beta$-map (Lemmas~\ref{lem:bias} and~\ref{lem:var}), which yields a dimension-free bound that an equidistribution argument cannot. Second, we show a single public seed preserves all $\binom{N}{2}$ centroid distances and is found by an efficient search (Theorem~\ref{thm:seed}), and we give a closed-form certifiable robustness radius (Proposition~\ref{prop:rob}). Third, we identify why the Pisot property, and not generic chaos, is required: the orbit is exactly reproducible in a fixed finite field at a per-step cost equal to the map's metric entropy, whereas a generic chaotic map costs $O(2^{k})$ bits (Section~\ref{sec:repro}). Fourth, we compare against six standard projections across five studies and make the physical meaning of each result explicit (Section~\ref{sec:exp}).

\section{Construction}
Let $\beta>1$ be a Pisot number, an algebraic integer all of whose Galois conjugates lie strictly inside the unit disk; the golden ratio $1.618$ and the plastic number $1.325$ are the two smallest examples. The $\beta$-transformation $T_\beta x=\beta x-\lfloor\beta x\rfloor$ stretches the interval by $\beta$ and folds it back, the prototype of deterministic mixing, and has a unique invariant Parry measure $\mu$ \cite{parry}. Because $\beta$ is Pisot, the $\beta$-expansion of $1$ is eventually periodic, the map has a finite Markov partition, and its transfer operator has a \emph{spectral gap} \cite{hofbauer}: correlations between an observable now and $k$ steps later decay geometrically. For the centered, normalized coordinate observable $\psi(x)=(x-\bar\mu)/s_\mu$,
\begin{equation}\label{eq:sg}
r_k:=\int \psi\,(\psi\circ T_\beta^{\,k})\,d\mu,\qquad |r_k|\le C\,\theta^{\,k},\quad r_0=1 .
\end{equation}
In signal terms the orbit is a bounded, unit-variance sequence whose autocorrelation collapses after a short lag, so it behaves like weakly colored noise, nearly white beyond a correlation time set by $\theta$. We fix a sampling gap $g\in\N$, retaining every $g$-th orbit point so that stored samples fall beyond that correlation time, and a public seed $x_0$. Writing $z_p=\psi(T_\beta^{\,pg}x_0)$, we fill $A=A(x_0)\in\R^{m\times d}$ row by row, $A_{ij}=z_{(i-1)d+j}$, and set the embedding $\Phi(v)=A v/\sqrt m$. For a unit vector $u$, the squared-norm estimator
\begin{equation}\label{eq:S}
S(u)=\tfrac1m\norm{Au}_2^2=\tfrac1m\textstyle\sum_{i=1}^m\langle a_i,u\rangle^2
\end{equation}
is the projected energy of $u$, and it should track the true energy $\norm{u}_2^2$ when lengths are preserved. The Pisot property enters only through reproducibility (Section~\ref{sec:repro}); the statistics below use only \eqref{eq:sg}.

\section{Theory}
Expectations $\E$ and variances $\Var$ are over the public seed $x_0\sim\mu$, and $C$ is fixed and known at circuit-design time. That the centroids are known in advance is what allows one fixed matrix to serve: no adversary can later place points in a direction the matrix handles poorly.

\begin{lemma}[Bias is dimension-free]\label{lem:bias}
For every unit vector $u$, $\big|\E S(u)-\norm{u}_2^2\big|\le 2C\theta^{g}/(1-\theta^{g})=:b(g).$
\end{lemma}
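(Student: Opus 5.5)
We expand $S(u)$ and take expectations term by term, using only the invariance of $\mu$ and the correlation bound \eqref{eq:sg}. Write row $i$ as $a_i=(z_{(i-1)d+1},\dots,z_{(i-1)d+d})$. Then
\begin{equation*}
\langle a_i,u\rangle^2=\sum_{j,k=1}^d u_ju_k\,z_{(i-1)d+j}\,z_{(i-1)d+k}.
\end{equation*}
Since $x_0\sim\mu$ and $\mu$ is $T_\beta$-invariant, we have
\begin{equation*}
\E[z_pz_q]=\int \psi\,(\psi\circ T_\beta^{\,|p-q|g})\,d\mu=r_{|p-q|g}.
\end{equation*}
So the expectation depends only on the lag, and the same covariance structure holds in every row. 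Hence
\begin{equation*}
\E\langle a_i,u\rangle^2=\sum_{j,k}u_ju_k\,r_{|j-k|g}
\end{equation*}
independently of $i$, and averaging over the $m$ rows gives
\begin{equation*}
\E S(u)=\sum_{j,k}u_ju_k r_{|j-k|g}.
\end{equation*}

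Next we split off the diagonal. Because $r_0=1$ and $\norm{u}_2=1$, the diagonal terms contribute exactly $\norm{u}_2^2$. The error is then
\begin{equation*}
\E S(u)-\norm{u}_2^2=2\sum_{\ell=1}^{d-1} r_{\ell g}\sum_{j=1}^{d-\ell}u_ju_{j+\ell}.
\end{equation*}
For each lag $\ell$, Cauchy--Schwarz gives $\big|\sum_j u_ju_{j+\ell}\big|\le\norm{u}_2^2=1$, with no dependence on $d$. Applying \eqref{eq:sg}, $|r_{\ell g}|\le C\theta^{\ell g}$, and summing the geometric series over $\ell\ge1$ gives
\begin{equation*}
\big|\E S(u)-\norm{u}_2^2\big|\le 2C\sum_{\ell\ge1}\theta^{g\ell}=\frac{2C\theta^g}{1-\theta^g}=b(g),
\end{equation*}
which is the claimed bound.

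The main point to handle with care is the dimension-freeness. A naive bound $\sum_{j\ne k}|u_j||u_k|\,|r_{|j-k|g}|$ must be organized by lag, and not by bounding $\sum|u_j|$, which would introduce $\sqrt d$. Grouping by lag and using the shifted-inner-product bound above avoids this, giving the $d$-independent constant. The other ingredient is stationarity: the identity $\E[z_pz_q]=r_{|p-q|g}$ relies on the seed being drawn from the invariant Parry measure, as stipulated at the start of this section. It is also this stationarity that makes all rows share the same second-moment matrix and lets the row average collapse.
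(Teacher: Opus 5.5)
Your proposal is correct and follows the paper's proof essentially line for line. Stationarity under $x_0\sim\mu$ gives $\E S(u)=\sum_{j,k}u_ju_k r_{g|j-k|}$, the diagonal contributes $\norm{u}_2^2$ because $r_0=1$, and grouping the off-diagonal terms by lag with the per-lag Cauchy--Schwarz bound $\bigl|\sum_j u_ju_{j+\ell}\bigr|\le 1$ yields the geometric series $2C\theta^g/(1-\theta^g)$, which is exactly the paper's argument, including its emphasis that organizing by lag is what keeps the bound free of $d$.
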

\begin{proof}
By stationarity $\E S(u)=\sum_{j,j'}u_ju_{j'}r_{g|j-j'|}$. The diagonal gives $\norm{u}_2^2$. For each lag $k\ge1$, $\sum_j|u_ju_{j+k}|\le\norm{u}_2^2=1$ by Cauchy--Schwarz, so the off-diagonal mass is at most $2\sum_{k\ge1}C\theta^{gk}$, which has no $d$-dependence.
\end{proof}
The bias is a systematic offset in the measured energy, caused only by residual coloring of the orbit. Because the map decorrelates geometrically the offset does not accumulate with dimension, and sampling below the correlation time drives it to zero: $g\ge\log(4C/\varepsilon)/\log(1/\theta)$ gives $b(g)\le\varepsilon/2$.

\begin{lemma}[Variance falls as $1/m$]\label{lem:var}
$\Var(S(u))\le V_0/m$. The elementary sup-norm estimate gives $V_0=O(B^4d^2)$ with $B=\norm{\psi}_\infty$, but this is loose. If the fourth-order correlations of $T_\beta$ are summable, then $V_0=O(1)$, uniformly in $d$.
\end{lemma}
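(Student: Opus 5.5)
We write $S(u)=\frac1m\sum_{i=1}^m Y_i$ with $Y_i=\langle a_i,u\rangle^2$. Then
\[
\Var(S(u))=\frac1{m^2}\sum_{i,i'}\Cov(Y_i,Y_{i'}),
\]
and the aim is to show that $\sum_{i'}|\Cov(Y_i,Y_{i'})|\le V_0$ uniformly in $i$, which gives $V_0/m$. Row $i$ uses the orbit block $z_{(i-1)d+1},\dots,z_{id}$, so rows $i$ and $i'$ are separated by at least $(|i-i'|-1)d+1$ sampled steps. In the original time this is a lag of at least $g\big((|i-i'|-1)d+1\big)$.

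\textbf{Step 1 (elementary bound).} We use only $|\psi|\le B$. By Cauchy--Schwarz, $Y_i\le B^2\norm{u}_1^2\le B^2 d$, so $|\Cov(Y_i,Y_{i'})|\le \Var(Y_i)\le\E Y_i^2\le B^4d^2$. The diagonal terms alone give $\sum_i\Var(Y_i)/m^2\le B^4d^2/m$.

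For the off-diagonal terms we invoke decay of correlations for the $\beta$-map on the Banach space used by \cite{hofbauer}. Since $T_\beta$ has a spectral gap on BV (or Lipschitz-on-Markov-cells) observables, for $f$ in that space and $h\in L^\infty$,
\[
\Big|\int f\,(h\circ T_\beta^{n})\,d\mu-\int f\,d\mu\int h\,d\mu\Big|\le C'\theta^n\norm{f}_{BV}\norm{h}_\infty .
\]
Apply this with $f=Y_i$ viewed as a function of the first point of its block (a composition of $\psi$ with iterates, so piecewise smooth), and with $h$ giving $Y_{i'}$ after shifting by the lag. The covariance is then $\lesssim\theta^{g((|i-i'|-1)d+1)}$ times norm factors. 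The BV norm of a function of $d$ iterates can grow like $\beta^{gd}$, so the cleanest route is to condition on a past $\sigma$-algebra, or equivalently to use the Markov partition and symbolic (Gibbs) mixing, whose constants do not depend on how many coordinates the observable reads. With $\psi$-mixing of the Parry measure, $|\Cov(Y_i,Y_{i'})|\le \phi(\text{lag})\,\E Y_i\,\E Y_{i'}$ for a geometrically decaying $\phi$, and $\E Y_i\le 1+b(g)$ by Lemma~\ref{lem:bias}. Since geometric decay is summable, the off-diagonal sum is $O(1)$, so $V_0=O(B^4d^2)$.

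\textbf{Step 2 (dimension-free bound).} This reduces to showing $\Var(Y_i)=O(1)$. We expand
\[
\E Y_i^2=\sum_{j_1,\dots,j_4}u_{j_1}u_{j_2}u_{j_3}u_{j_4}\,\E[z_{j_1}z_{j_2}z_{j_3}z_{j_4}]
\]
and write each fourth moment as a sum of pair products plus the joint cumulant $\kappa_4$. The pairings give $3\big(\sum_{j,j'}u_ju_{j'}r_{g|j-j'|}\big)^2=3(1+O(b(g)))^2$ by the proof of Lemma~\ref{lem:bias}. The cumulant part is where the hypothesis enters. "Summable fourth-order correlations" is taken to mean
\[
\sup_{j_1}\sum_{j_2,j_3,j_4}|\kappa_4(z_{j_1},\dots,z_{j_4})|=K<\infty .
\]
This bounds the cumulant contribution, with $|u_j|\le1$ and a Young/Schur-type inequality, by $K\norm{u}_2^2=K$. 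Then $\Var(Y_i)\le 2+K+O(b(g))$, independent of $d$. The cross-row terms are handled identically: in the expansion of $\Cov(Y_i,Y_{i'})$, only the cross-pairings $2\big(\sum u_ju_{j'}r_{\cdot}\big)^2$ across blocks, plus cumulants, survive. Across blocks these are bounded by $\big(\sum_{k\ge (|i-i'|-1)d}C\theta^{gk}\big)^2$ (again Cauchy--Schwarz per lag) plus cumulant tails, which sum over $i'$ to $O(1)$.

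\textbf{Main obstacle.} The hard step is controlling the cross-row covariances in Step 1 without letting the observable complexity (BV norm) grow with $d$. I would first try the symbolic $\psi$-mixing route, relying on the finite Markov partition guaranteed by the Pisot hypothesis. If that fails, I would use the cumulant expansion of Step 2 with the pair-correlation bound \eqref{eq:sg} and accept the stated summability hypothesis to cover the cross terms too.
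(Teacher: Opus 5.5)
Your proof is correct and follows the paper's skeleton. You split $\Var(S)$ into row covariances, bound the diagonal crudely by $B^4d^2$, and make the cross-row terms a geometric tail. You then remove the $d^2$ by splitting $\E\langle a_i,u\rangle^4$ into the three pairings, which give $3(\E Y_i)^2$, plus the fourth cumulant.

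You differ from the paper in the cross-row step, and there you are more careful. The paper's sketch passes straight from the two-point bound \eqref{eq:sg} to $|\Cov(Y_i,Y_{i'})|\le C(B^2d)^2\theta^{gd|i-i'|}$. A pair-correlation bound for $\psi$ does not by itself control the covariance of two quartic functionals of $d$ orbit points each. Moreover, the minimal lag between adjacent rows is $g$, not $gd$; your count $g((|i-i'|-1)d+1)$ is the right one. You supply a genuinely multi-point input instead. Your Step~2 (the explicit hypothesis $\sup_{j_1}\sum|\kappa_4|=K$ plus the Schur test) treats diagonal and cross-row cumulants together. Carried through, it gives an explicit $d$-free constant $V_0\le 2(1+b(g))^2+2K+O(b(g)^2)$.

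Two refinements to Step~1. First, the $\psi$-mixing inequality needs variables measurable with respect to separated digit $\sigma$-algebras. But $\psi(T_\beta^{t}x_0)$ depends on the whole digit tail after time $t$, so you must first truncate each coordinate to finitely many digits (error $O(\beta^{-L})$). That route also leans on the Pisot Markov partition, which the paper deliberately keeps out of the statistics.

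Second, your ``main obstacle'' disappears if you use the transfer operator $\mathcal{L}$ normalized by $\mu$, via the identity $\mathcal{L}^k\big((f\circ T_\beta^k)\,h\big)=f\,\mathcal{L}^kh$. Write $Y_i=F\circ T_\beta^{(i-1)dg}$ with $F=\big(\sum_ju_j\,\psi\circ T_\beta^{jg}\big)^2$. For $j\le j'$ this gives $\mathcal{L}^{dg}\big[(\psi\circ T_\beta^{jg})(\psi\circ T_\beta^{j'g})\big]=\mathcal{L}^{(d-j')g}\big(\psi\cdot\mathcal{L}^{(j'-j)g}\psi\big)$, whose BV norm is $O(\theta^{g(j'-j)})$. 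Cauchy--Schwarz per lag, as in Lemma~\ref{lem:bias}, then gives $\norm{\mathcal{L}^{dg}F}_{BV}=O(1)$ uniformly in $d$. Hence $|\Cov(Y_i,Y_{i'})|=\big|\int(\mathcal{L}^{|i-i'|dg}F-\E F)\,F\,d\mu\big|\le C\theta^{(|i-i'|-1)dg}\,\E Y_i$ for $i'\ne i$. This uses only the BV spectral gap and holds for every $\beta>1$, with no symbolic mixing needed.
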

\begin{proof}[Proof sketch]
Write $\Var(S)=m^{-2}\sum_{i,i'}\Cov(\langle a_i,u\rangle^2,\langle a_{i'},u\rangle^2)$. Diagonal terms are bounded by $B^4d^2$; rows $i\ne i'$ are separated by orbit lag $\ge gd|i-i'|$, so \eqref{eq:sg} bounds their covariance by $C(B^2d)^2\theta^{gd|i-i'|}$, a geometric tail. The $d^2$ enters only through the crude diagonal bound; replacing it by the fourth moment of the weakly dependent sum $\langle a_i,u\rangle=\sum_j a_{ij}u_j$ removes the $d$ factor whenever the joint fourth-order cumulants are summable.
\end{proof}
The variance is the measurement noise of the estimator, falling as $1/m$ by averaging $m$ nearly independent measurements. The decisive content is that $V_0$ does not grow with $d$: a projected coordinate is a sum of $d$ weakly dependent terms, hence approximately Gaussian for every $d$, and the crude $d^2$ bound is pessimistic precisely because it ignores this Gaussianization. A fixed budget $m$, independent of $d$, therefore measures a length accurately.

\begin{theorem}[A good public seed exists and is found by search]\label{thm:seed}
Fix $\varepsilon,\delta\in(0,1)$ and $g$ with $b(g)\le\varepsilon/2$.
(a) If $m\ge 2V_0N^2/(\delta\varepsilon^2)$, then the seeds $x_0$ giving $(1\pm\varepsilon)$ distortion on all $\binom N2$ centroid pairs have measure at least $1-\delta$; a candidate is checked in $O(N^2m)$ time, so a valid public $(A,x_0)$ is found by search and published, and the circuit holds no randomness.
(b) If $S(u)$ obeys a Bernstein bound $\Pr(|S-\E S|>t)\le 2e^{-cmt^2/V_1}$, available for spectral-gap maps \cite{gouezel}, then $m=O(V_1\varepsilon^{-2}\log(N/\delta))$ suffices.
\end{theorem}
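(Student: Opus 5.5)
The plan is a union bound over the $\binom N2$ centroid pairs, with Chebyshev in part (a) and the Bernstein tail in part (b). For each pair $(k,l)$ with $c_k\ne c_l$, set $u_{kl}=(c_k-c_l)/\norm{c_k-c_l}_2$. Since $\Phi$ is linear, $\norm{\Phi(c_k)-\Phi(c_l)}_2^2=\norm{c_k-c_l}_2^2\,S(u_{kl})$. So $(1\pm\varepsilon)$ distortion of the squared pair distance is exactly the event $|S(u_{kl})-1|\le\varepsilon$. (Distortion of the unsquared distance then follows with a harmless rescaling of $\varepsilon$.) It therefore suffices to show
\begin{equation*}
\Pr\big(\exists (k,l):|S(u_{kl})-1|>\varepsilon\big)\le\delta .
\end{equation*}

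For each pair, split the deviation into bias and fluctuation using the triangle inequality:
\begin{equation*}
|S-1|\le|\E S-1|+|S-\E S|.
\end{equation*}
By Lemma~\ref{lem:bias} and the choice of $g$, the first term is at most $b(g)\le\varepsilon/2$. Hence a failure on that pair forces $|S(u_{kl})-\E S(u_{kl})|>\varepsilon/2$.

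For (a), apply Chebyshev together with Lemma~\ref{lem:var}:
\begin{equation*}
\Pr\big(|S-\E S|>\varepsilon/2\big)\le \frac{4V_0}{m\varepsilon^2}.
\end{equation*}
A union bound over $\binom N2\le N^2/2$ pairs gives total failure probability at most $2V_0N^2/(m\varepsilon^2)$, which is $\le\delta$ under the stated bound on $m$. Thus the set of good seeds has $\mu$-measure at least $1-\delta$.

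For the algorithmic claim, draw candidate seeds and test each one. Computing $A(x_0)$ and the $N$ projected centroids, then all pairwise projected distances, costs $O(N^2m)$ once the projections are available. The projection step itself is $O(Nmd)$ preprocessing, which I would either absorb or note separately. The number of trials before success is geometric with mean at most $1/(1-\delta)$. Once a good seed is found it is fixed and published, so the circuit only evaluates a public matrix.

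For (b), use the assumed Bernstein tail with $t=\varepsilon/2$ in place of Chebyshev. Each pair then fails with probability at most $2e^{-cm\varepsilon^2/(4V_1)}$. The union bound gives $N^2e^{-cm\varepsilon^2/(4V_1)}\le\delta$ once $m\ge (4V_1/(c\varepsilon^2))\log(N^2/\delta)$, which is $O(V_1\varepsilon^{-2}\log(N/\delta))$.

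I expect the main obstacle to be the interpretation rather than the algebra. The probability is over $x_0\sim\mu$, the Parry measure, while a practical search samples seeds from some implementable distribution. I would handle this by noting that $\mu$ is equivalent to Lebesgue measure with density bounded above and below, so a positive-measure good set remains positive for uniform sampling up to a constant. Part (b) also depends entirely on the cited concentration hypothesis, and I would state that dependence explicitly rather than derive it.
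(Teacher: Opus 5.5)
Your proposal is correct and is essentially the paper's proof. Both use the split $|S-1|\le|S-\E S|+b(g)$ with $b(g)\le\varepsilon/2$, then Chebyshev plus Lemma~\ref{lem:var} for a per-pair failure of $4V_0/(m\varepsilon^2)$, a union bound over fewer than $N^2/2$ pairs, and the Bernstein tail in place of Chebyshev for part (b). Your side remarks on the $O(Nmd)$ cost of forming the projected centroids and on sampling seeds from $\mu$ versus Lebesgue measure are sound refinements that the paper leaves implicit.
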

\begin{proof}
For a unit difference direction, $|S-1|\le|S-\E S|+b(g)$. With $b(g)\le\varepsilon/2$, Chebyshev and Lemma~\ref{lem:var} give per-pair failure $\le 4V_0/(m\varepsilon^2)$; a union bound over $<N^2/2$ pairs and the stated $m$ give total failure $\le\delta$. Part (b) replaces Chebyshev by the sub-exponential tail.
\end{proof}
One fixed apparatus is calibrated once by searching over seeds, much as a fixed sensor array is designed to preserve the positions of known sources. The gap between the two counts is a gap between tail estimates: bounding fluctuations by variance alone forces a margin proportional to the pair count, hence $N^2$, whereas the exponentially small tails of a mixing map suppress rare large-distortion events and cost only $\log N$. The first is unconditional, the second conditional on that concentration bound, the single open step. We do not claim validity for arbitrary inputs: Blanchard \emph{et al.} \cite{blanchard} show restricted isometry constants need not decay for deterministic matrices, so we prove distance preservation only on the fixed known set.

\begin{figure*}[tp]\centering
\includegraphics[width=0.82\textwidth]{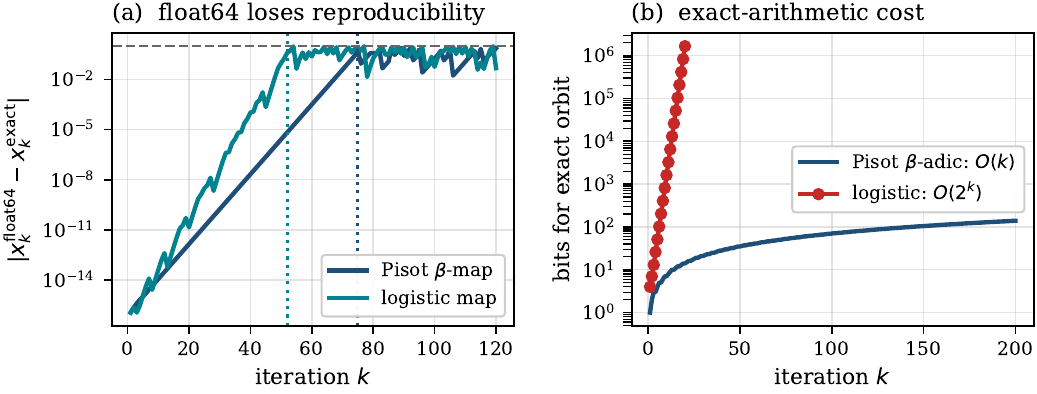}
\caption{Reproducibility. (a) The float64 orbit diverges from the exact orbit for both chaotic maps once the accumulated error reaches order one, and the onset matches the Lyapunov prediction $52\log2/\lambda$ (dotted lines), so floating point cannot define a shared matrix. (b) Cost of exact reproduction: the Pisot $\beta$-adic representation grows at the entropy rate $\log_2\beta$ per step and fits a fixed finite field, whereas the generic logistic map needs $O(2^{k})$ bits. This is the reason the Pisot property, not generic chaos, is required.}
\label{fig:repro}
\end{figure*}
\begin{figure*}[tp]\centering
\includegraphics[width=0.99\textwidth]{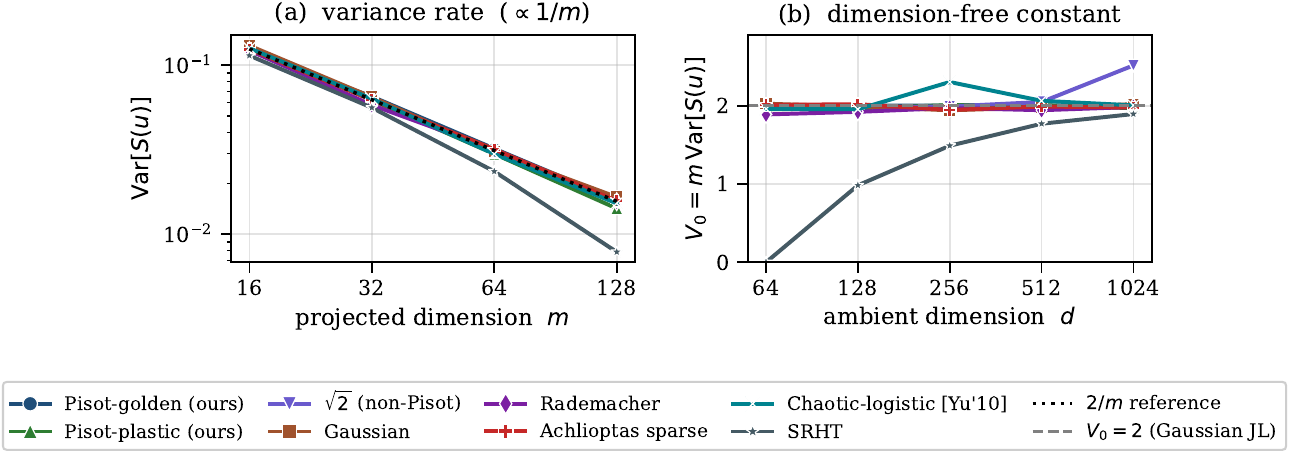}
\caption{Statistical accuracy, deterministic Pisot projection against six standard projections (shared legend). (a)~The estimator variance follows the $1/m$ law for every method and overlaps the $2/m$ reference. (b)~The variance constant $V_0=m\Var(S)$ stays flat near $2$ with no $d^2$ growth; SRHT is exact at $m=d$ and rises toward $2$.}
\label{fig:stat}
\end{figure*}

\begin{proposition}[Certifiable robustness radius]\label{prop:rob}
Let $\Phi=A/\sqrt m$ satisfy $(1\pm\varepsilon)$ distortion on $C$. Any input perturbation $\eta$ with
$\norm{\eta}_2<\rho:=\min_{i\ne j}\norm{\Phi(c_i-c_j)}_2 / (2\norm{\Phi}_{\mathrm{op}})$
cannot change the nearest projected centroid, where $\min_{i\ne j}\norm{\Phi(c_i-c_j)}_2\ge\sqrt{1-\varepsilon}\,\min_{i\ne j}\norm{c_i-c_j}_2$.
\end{proposition}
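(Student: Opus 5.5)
The plan is to reduce the claim to a triangle-inequality argument in the projected space $\R^m$, after fixing what ``cannot change the nearest projected centroid'' means. I will read the statement for an input sitting at a centroid, $v=c_k$, which is the reading under which a radius depending only on $C$ and $\Phi$ makes sense. The claim is then: for every $\eta$ with $\norm{\eta}_2<\rho$, the index $k$ is the unique minimizer of $j\mapsto\norm{\Phi(c_k+\eta)-\Phi c_j}_2$. For a general $v$, the same argument goes through with $\rho$ replaced by a margin term $(\norm{\Phi v-\Phi c_{j}}_2-\norm{\Phi v-\Phi c_k}_2)/(2\norm{\Phi}_{\mathrm{op}})$. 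I would remark on this but not pursue it.

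\emph{Step 1: image of the perturbation.} Write $\Delta:=\min_{i\ne j}\norm{\Phi(c_i-c_j)}_2$. By linearity, $\Phi(c_k+\eta)=\Phi c_k+\Phi\eta$. By the definition of the operator norm,
\[
\norm{\Phi\eta}_2\le\norm{\Phi}_{\mathrm{op}}\norm{\eta}_2<\norm{\Phi}_{\mathrm{op}}\,\rho=\Delta/2 .
\]
So the perturbed projection stays within $\Delta/2$ of $\Phi c_k$.

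\emph{Step 2: separation.} For any $j\ne k$, the reverse triangle inequality gives
\[
\norm{\Phi(c_k+\eta)-\Phi c_j}_2\ge\norm{\Phi(c_k-c_j)}_2-\norm{\Phi\eta}_2>\Delta-\Delta/2=\Delta/2 .
\]
For $j=k$ the distance is $\norm{\Phi\eta}_2<\Delta/2$. The strict inequality then shows that $c_k$ remains the unique nearest projected centroid, since the two open balls of radius $\Delta/2$ about $\Phi c_k$ and $\Phi c_j$ are disjoint.

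\emph{Step 3: the lower bound on $\Delta$.} Take $(1\pm\varepsilon)$ distortion on $C$ to mean, as in Theorem~\ref{thm:seed}, that $S(u)=\norm{\Phi u}_2^2\in[1-\varepsilon,1+\varepsilon]$ for each unit difference direction $u=(c_i-c_j)/\norm{c_i-c_j}_2$. By homogeneity this gives
\[
\norm{\Phi(c_i-c_j)}_2^2\ge(1-\varepsilon)\norm{c_i-c_j}_2^2 .
\]
Taking square roots and the minimum over pairs yields $\Delta\ge\sqrt{1-\varepsilon}\,\min_{i\ne j}\norm{c_i-c_j}_2$. This makes $\rho$ explicitly computable from public data once $\norm{\Phi}_{\mathrm{op}}$ is known; that quantity is itself computable offline for the published $A(x_0)$.

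\emph{Main obstacle.} There is no real analytic difficulty. The only delicate point is the interpretive one in the first paragraph: the nearest-centroid decision for an arbitrary $v$ is not protected by a radius that ignores $v$'s own margin. I would therefore state the centroid-input reading explicitly and note the margin version. A secondary point is that $\norm{\Phi}_{\mathrm{op}}$ is not controlled by the distortion hypothesis, which holds only on $C$. The radius is therefore certified using the actual operator norm of the fixed published matrix, not a JL bound.
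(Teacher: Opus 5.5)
Your proof is correct and follows the paper's argument, written out in more detail: you bound the projected displacement by $\norm{\Phi}_{\mathrm{op}}\norm{\eta}_2$, compare it with half the minimum projected inter-centroid gap via the triangle inequality, and obtain the gap bound by applying the squared-norm JL guarantee to each difference vector. Your explicit restriction to queries sitting at a centroid, with a margin-dependent radius for general $v$, is a genuine clarification that the paper's one-line proof tacitly needs---for a query near a projected Voronoi boundary, no radius depending only on $C$ and $\Phi$ can protect the decision.
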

\begin{proof}
A perturbation moves the projected query by at most $\norm{\Phi}_{\mathrm{op}}\norm{\eta}_2$; if that is below half the smallest projected inter-centroid gap, the nearest centroid cannot change. The gap bound is the JL guarantee applied to each difference vector.
\end{proof}
Here $\rho$ is the noise margin of the classifier: the largest disturbance that cannot push a query past the midpoint between two classes. It is the analogue of the minimum distance of a code, and because it depends only on the public matrix it is certified in advance rather than estimated at run time.

\section{Why Pisot: Exact Reproducibility}\label{sec:repro}
Two independent provers must reproduce the same matrix $A$ bit for bit. A chaotic map amplifies error at its Lyapunov rate, which for the $\beta$-map is the stretching rate $\log\beta$, so a float64 implementation loses all agreement after about $52\log 2/\lambda$ steps. Fig.~\ref{fig:repro}(a) confirms this: the float and exact orbits diverge near $k=72$ for the $\beta$-map and $k=51$ for the logistic map, matching the Lyapunov predictions marked on the plot. Exact arithmetic is therefore required.

The cost of exact arithmetic is where the Pisot property is decisive. The digits consumed per step equal the metric entropy $\log_2\beta$, and for a Pisot $\beta$ every element of $\mathbb{Q}(\beta)$ has an eventually periodic $\beta$-expansion \cite{schmidt}, so the orbit lives in a finite set of states and $k$ steps cost $O(k\log_2\beta)$ bits, which fits a fixed finite field. A generic chaotic map has no such algebraic closure: the exact logistic orbit of a rational seed has denominators of size $3^{2^{k}}$, that is $O(2^{k})$ bits. Fig.~\ref{fig:repro}(b) shows the contrast, the logistic cost reaching $1.66\times10^{6}$ bits at $k=20$ while the $\beta$-adic cost at $k=200$ is only $139$ bits. The same quantity $\log\beta$ governs both the sensitivity to error and the exact-arithmetic budget, and only a Pisot map keeps that budget linear.

\subsubsection*{Deployment path} Offline, once $C$ is fixed, one selects a Pisot $\beta$ and searches public seeds $x_0$ until the matrix preserves every centroid distance to $\varepsilon$ (Theorem~\ref{thm:seed}); the short tuple $(\beta,x_0,m,g)$ is published as a circuit parameter that anyone can regenerate and audit. Online, the prover encodes each entry as a fixed-point $\beta$-adic integer native to the field, forms $v'=Av$ at $O(md)$ constraints, and runs the class test at $O(mN)$ rather than $O(dN)$. Because $A$ is public and fixed, no randomness is committed or proved, so the sampling and range proofs a committed random matrix needs are absent.

\section{Numerical Results}\label{sec:exp}
\begin{figure*}[tp]\centering
\includegraphics[width=0.85\textwidth]{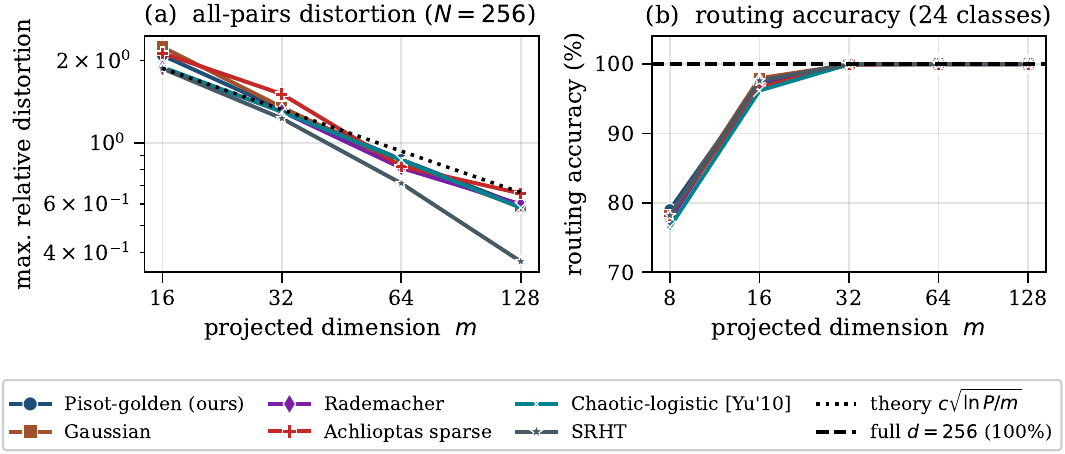}
\caption{Geometric and task fidelity (shared legend). (a)~Worst-case relative distortion over all $\binom{256}{2}$ centroid pairs tracks Gaussian JL and every baseline. (b)~Nearest-centroid routing accuracy for $24$ classes in $d=256$ recovers the full-dimension $100\%$ by $m=32$, an $8\times$ compression.}
\label{fig:geom}
\end{figure*}

We compare the two Pisot maps (golden, plastic) against six standard projections: Gaussian and Rademacher random matrices, the Achlioptas sparse ternary matrix \cite{achlioptas}, the subsampled randomized Hadamard transform (SRHT) \cite{ailon}, and the chaotic-sequence matrix of Yu \emph{et al.} \cite{yu}, the closest prior deterministic method. Every projection is produced through one interface and scored with the identical estimator, so differences reflect the matrices, not the harness, and every number below is measured, not predicted from the bounds. Statistics are taken over many independent seeds, the distribution the offline search draws from. The parameters track deployments: $d$ runs to $1024$, $N$ to $512$, and the nearest-centroid task stands in for the routing decision.

We start by plotting the Variance Rate (Fig.~\ref{fig:stat}(a)). The question is how few projections $m$, hence how cheap a proof, still measure a length reliably. Every method follows the $1/m$ law (slopes near $-1$) and overlaps the $2/m$ reference, so the deterministic matrix is as stable as a random one. Next we plot the Dimension-free constant (Fig.~\ref{fig:stat}(b)). The question is whether $m$ must grow as embeddings get larger. The constant $V_0=m\Var(S)$ stays flat near $2$ from $d=64$ to $1024$ for all methods, with no $d^2$ growth (Lemma~\ref{lem:var}); SRHT is exact at $m=d$ and rises toward $2$. A projected coordinate is Gaussianized regardless of $d$, so its variance, and the value $V_0\approx2$, are independent of $d$. The consequence is operational: the same $m$, and the same proof cost, serves a $384$- and a $4096$-dimensional embedding. The non-Pisot control $\beta=\sqrt2$ behaves identically, so the statistical quality is generic to expanding maps and the Pisot choice is spent only on reproducibility. We also represent the all-pairs distance preservation in Fig.~\ref{fig:geom}(a)). A single mispreserved pair could misroute a message, so the safety-relevant metric is the worst case over all pairs. For $N=256$ centroids ($32{,}640$ pairs) it tracks Gaussian JL and every baseline along the $c\sqrt{\ln P/m}$ trend, so the whole routing table is preserved. For downstream task (Fig.~\ref{fig:geom}(b)), on a $24$-class nearest-centroid problem in $d=256$, full-dimension accuracy is $100\%$ and every method recovers it by $m=32$, an $8\times$ compression: the routing decision is unchanged after compression, which is what a deployed router needs.

Sweeping $N$ from $64$ to $512$ jointly with $m$, the worst-case distortion surface is nearly identical for the deterministic matrix and Gaussian JL, rising slowly with $N$ and falling with $m$, matching the $\log N$ dependence of Theorem~\ref{thm:seed}: a routing table that doubles needs only a small increase in $m$. Table~\ref{tab:sota} summarizes the comparison. Statistical quality is matched across all methods; the construction is distinguished only in the ZK-relevant columns, being at once free of in-circuit randomness, exactly reproducible in a fixed finite field at the entropy rate, and backed by a spectral-gap account of the dimension-free variance constant.

\begin{table}[tb]
\caption{Comparison with standard projections. Statistical quality is matched; the last three columns are the ZK-relevant differences.}
\label{tab:sota}
\centering\footnotesize\setlength{\tabcolsep}{3.2pt}
\begin{tabular}{@{}lccccc@{}}
\toprule
Method & Var & $V_0$ flat & in-circ. & exact FF & dim-free \\
 & slope & in $d$ & random & repro. & account \\
\midrule
Pisot (ours) & $-0.98$ & yes ($2.0$) & none & yes, $O(k)$ & yes \\
Gaussian & $-1.00$ & yes ($2.0$) & $md$ reals & no & yes \\
Rademacher & $-0.98$ & yes ($2.0$) & $md$ bits & no & yes \\
Achlioptas \cite{achlioptas} & $-0.99$ & yes ($2.0$) & $\sim md$ bits & no & yes \\
SRHT \cite{ailon} & $-1.28$ & yes ($\le2$) & $d{+}$sub. & no & yes \\
Chaotic \cite{yu} & $-1.02$ & yes ($2.0$) & none & no, $O(2^k)$ & emp. \\
\bottomrule
\end{tabular}
\end{table}

\section{Conclusion}
Replacing the random JL projection in ZK private routing by the orbit of a Pisot $\beta$-transformation removes all in-circuit randomness and, in a six-way comparison, matches standard projections on variance rate, dimension-free variance constant, all-pairs distortion, and a downstream routing task, while remaining uniquely reproducible in a fixed finite field at the entropy rate $\log_2\beta$. The open theoretical step is the transfer-operator concentration inequality that would upgrade the unconditional $O(N^2)$ seed bound to the conditional $O(\log N)$ target. The practical step is an end-to-end evaluation on a concrete backend, implementing the projection and class test in R1CS or PLONK and measuring prover time, memory, and proof size against the identical circuit built on a committed random matrix that must also prove its own sampling. The constraint counts given here predict the crossover.

\end{document}